\newtheorem{thm}{Theorem}[section]
\newtheorem{lem}{Lemma}[section]
\newtheorem{cor}{Corollary}[section]
\newtheorem{prop}{Proposition}[section]
\numberwithin{equation}{section}
\def\R{I\!\!R}
\title{
\textbf{Finding formulas connecting Bessel and hypergeometric functions using multipliers of Bessel operator I} }
\author{Mohamed Vall Ould Moustapha}
\begin{document}

\maketitle

\begin{abstract}In this work we give explicit formulas for the Schwartz integral kernels of some multipliers of the Bessel operator on $\R^\ast_+$. By using the integral transforms connecting these multipliers we obtain old and new formulas involving Bessel and hypergeometric functions.
\end{abstract}
{\bf Key words }: Bessel operator, Multipliers, Weighted heat kernel, Weighted resolvent kernels, Bessel functions,  Kampe de Feri\'et generalized hypergeometric functions.\\
\begin{center}\bf
\section{Introduction}
\end{center}
The Bessel operator is an interesting operator which arises in several contexts,  one of them being the Schr\"odinger equation in non relativistic quantum mechanics \cite{15}. For the recent papers on the Bessel operator see(\cite{3, 9, 10, 17}).
The aim of this paper is twofold: first we give explicit formulas for the Schwartz integral kernels of the following multipliers
$ H_\nu^p(t):=e^{t L_\nu}(\sqrt{-L_\nu})^p, R_\nu^p(\lambda):=\left(L_\nu+\lambda^2\right)^{-1}(\sqrt{-L_\nu})^p, R_\nu^{\mu, p}(\lambda):=\left(L_\nu+\lambda^2\right)^{-1-\mu}(\sqrt{-L_\nu})^p$ called here respectively the weighted heat, weighted resolvent and weighted generalised resolvent operator associated to the  Bessel operator on $\R^\ast_+$:\\
 
\begin{equation}\label{(1.5)} L_{\nu}=\frac{\partial^2}{\partial x^2} +\frac{1/4-\nu^2}{x^2}, \ \ \ \ \ \ \nu\in \R .\end{equation}
and secondly, by using the integral transforms connecting these multipliers we obtain old and new formulas involving Bessel and hypergeometric functions. For finding formulas involving Bessel and hypergeometric functions using multipliers on the Euclidian space $\R^n$ see\cite{13}. \\
First of all we recall the following formulas for the classical heat kernel for the Bessel operator (\cite{1}, p. 68):
\begin{equation}\label{cl-heat}H^0_\nu(t, x, x')=\frac{(x x')^{1/2}}{2 t}e^{\frac{-(x^2+x'^2)}{4t}}I_\nu(\frac{x x'}{2t}).\end{equation}
 and the resolvent kernel \cite{10}
\begin{equation}\label{cl-res} R_\nu^0(\lambda, x, x')=\frac{i\pi}{2}\sqrt{x x'}\left\{\begin{array}{cc}J_\nu(\lambda x) H^{(1)}_\nu(\lambda x')& \mbox{$x<x'$}\\
J_\nu(\lambda x') H^{(1)}_\nu(\lambda x)&\mbox{$x>x'$}
\end{array}\right.\end{equation}
Using the fact that the resolvent kernel is the Laplace transform of the heat kernel we have for $ {\cal R} e \lambda^2 < 0$ and $\nu>-1$:

\begin{equation}\label{lapl-heat}\int_0^\infty e^{\lambda^2 t} t^{-1}e^{\frac{-(x^2+x'^2)}{4t}}I_\nu(\frac{x x'}{2t}) dt=i\pi \left\{\begin{array}{cc}J_\nu(\lambda x) H^{(1)}_\nu(\lambda x')& \mbox{$x<x'$}\\
J_\nu(\lambda x') H^{(1)}_\nu(\lambda x)&\mbox{$x>x'$}
\end{array}\right.\end{equation}
where $I_\nu$ is the first kind modified Bessel function, $J_\nu$ and $H^{( 1 )}_\nu$ are respectively the first and the third kind Bessel functions ( see\cite{5, 11} ).\\
We mention that the absolute convergence of the above integral is assured by the formulas  (\cite{11}, p. 136)
\begin{equation}I_\nu(x)\approx \frac{x^\nu}{2^\nu\Gamma(1+\nu)}, x\rightarrow 0; 
I_\nu(x)\approx \frac{e^x}{\sqrt{2 \pi x}}, x\rightarrow \infty .\end{equation} 
The end of this section is devoted to the preliminaries on the Hankel transform on $\R^+$.\\

For $\nu>-1$, the Hankel transform of order $\nu$ for a function $f\in C^\infty_0(\R^+)$,is defined by the integral
\begin{equation}\label{4} \left(H_\nu f\right)(\omega)=\int_0^\infty (x\omega)^{1/2}J_\nu(x\omega)f(x)d x\end{equation}
where $J_\nu$ is the first order Bessel function of order $\nu$.\\

\begin{prop}\cite{14} For $\nu>-1$, we have\\
$i) H_\nu^2=1$\\
$ii) H_\nu $ is self adjoint\\
$iii) H_\nu$ is an $L^2$  isometry \\
$iv) H_\nu L_\nu=-\omega^2H_\nu$.\\
\end{prop}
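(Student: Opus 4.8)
The plan is to exploit that the kernel $k(x,\omega)=(x\omega)^{1/2}J_\nu(x\omega)$ of $H_\nu$ is a generalized eigenfunction of the operator $L_\nu=\frac{d^2}{dx^2}-\frac{\nu^2-1/4}{x^2}$, and that it is real and symmetric in $(x,\omega)$. I would begin with property $iv)$, since it furnishes the computational identity underlying everything else. Starting from Bessel's equation for $J_\nu$, the substitution $u(x)=x^{1/2}J_\nu(\omega x)$ yields, after clearing the first--order term,
\begin{equation}
L_\nu\bigl[x^{1/2}J_\nu(\omega x)\bigr]=-\omega^2\,x^{1/2}J_\nu(\omega x),
\end{equation}
so that $L_\nu^{(x)}k(x,\omega)=-\omega^2 k(x,\omega)$. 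For $f\in C_0^\infty(\R^+)$, integrating by parts twice transfers $L_\nu$ from $f$ onto the kernel; the boundary terms vanish because $f$ has compact support in the open half--line, and the eigenfunction identity then gives $\bigl(H_\nu L_\nu f\bigr)(\omega)=-\omega^2\bigl(H_\nu f\bigr)(\omega)$, which is $iv)$.

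Next I would dispatch $ii)$, the easiest part. Since $J_\nu$ is real on $(0,\infty)$ for $\nu>-1$ and $k(x,\omega)=k(\omega,x)$, Fubini's theorem applied to $\langle H_\nu f,g\rangle=\int_0^\infty\!\!\int_0^\infty k(x,\omega)f(x)\overline{g(\omega)}\,dx\,d\omega$ permits interchanging the order of integration, after which the inner integral is recognized as $\overline{(H_\nu g)(x)}$, giving $\langle H_\nu f,g\rangle=\langle f,H_\nu g\rangle$.

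The heart of the argument is $i)$. Writing out $H_\nu^2 f$ and interchanging the integrations gives
\begin{equation}
\bigl(H_\nu^2 f\bigr)(y)=\int_0^\infty f(x)\Bigl[\int_0^\infty k(y,\omega)k(x,\omega)\,d\omega\Bigr]dx,
\end{equation}
so everything reduces to the Hankel closure relation $\int_0^\infty\omega J_\nu(y\omega)J_\nu(x\omega)\,d\omega=x^{-1}\delta(x-y)$, which collapses the bracket to $\delta(x-y)$ and returns $f(y)$. This closure relation --- equivalently the completeness of the system $\{J_\nu(\omega\,\cdot)\}_{\omega>0}$ --- is the genuine analytic obstacle. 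I would establish it either by invoking the classical Hankel inversion theorem (it is the spectral resolution of the self--adjoint realization of $L_\nu$ on $L^2(\R^+)$, available once the extension dictated by $\nu>-1$ is fixed), or by a limiting argument from the Weber--Schafheitlin integral, controlling the oscillatory tail through the asymptotics $J_\nu(z)\sim\sqrt{2/\pi z}\cos(z-\nu\pi/2-\pi/4)$.

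Finally $iii)$ is a formal consequence of $i)$ and $ii)$: together they give $H_\nu^\ast H_\nu=H_\nu H_\nu=H_\nu^2=1$, whence $\|H_\nu f\|_2^2=\langle f,H_\nu^2 f\rangle=\|f\|_2^2$, so that $H_\nu$ is an isometry, and in fact unitary since it is its own inverse. The difficulty is thus concentrated entirely in the closure relation of $i)$; once it is granted, the remaining three statements follow from routine manipulations of the symmetric kernel and its eigenfunction property.
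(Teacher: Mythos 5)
The paper offers no proof of this proposition at all: it is simply quoted from reference \cite{14}, so there is no in-text argument to measure yours against. Your outline is the standard one and is essentially correct. The eigenfunction identity $L_\nu\bigl[x^{1/2}J_\nu(\omega x)\bigr]=-\omega^2 x^{1/2}J_\nu(\omega x)$ does follow from Bessel's equation for $L_\nu=\frac{d^2}{dx^2}-\frac{\nu^2-1/4}{x^2}$ (the operator the paper never writes down explicitly but clearly intends), and for $f\in C_0^\infty(\R^+)$ the two integrations by parts produce no boundary contributions, so $iv)$ is sound; $ii)$ is likewise immediate from the real, symmetric kernel once one restricts to compactly supported test functions, for which Fubini is justified. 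You correctly concentrate the entire analytic weight of the proposition in the Hankel closure relation needed for $i)$; the one caution is that the delta-function manipulation you display is purely formal, and a rigorous treatment proves the inversion $H_\nu(H_\nu f)=f$ directly for, say, piecewise-smooth integrable $f$ (Watson's classical argument, or the spectral-theoretic route you mention) and then extends to all of $L^2$ by density --- a step you should make explicit, since the integral defining $H_\nu f$ need not converge absolutely for general $f\in L^2$, so $H_\nu$ on $L^2$ is only defined as the continuous extension furnished by the isometry property. Granting the inversion theorem, $iii)$ follows from $i)$ and $ii)$ exactly as you say, and your ordering of the four claims is logically clean.
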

For more informations on the Hankel transform the reader can consults the nice book by Davies \cite{2}. \\
Note that we can define  $\phi (\sqrt{-L_\nu})$ for $\phi$ a well behaved Borel function by using the  Hankel transform:\\
\begin{prop} For $\nu> -1$, the Schwartz integral kernel of the operator $\phi(\sqrt{-L_\nu})$, is given at last formally by

\begin{equation}\label{spectral}K_{\nu}(\phi, x, x')=(x x')^{1/2}\int_0^\infty J_{\nu}(\omega x)J_\nu(\omega x')\phi(\omega) \omega d\omega. \end{equation}
\end{prop}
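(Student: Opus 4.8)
The plan is to derive the stated kernel formula directly from the functional calculus for $\sqrt{-L_\nu}$ built on the Hankel transform, exactly as was done implicitly in the preceding Proposition. The starting point is the spectral identity $H_\nu L_\nu = -\omega^2 H_\nu$ from part $iv)$, which says that conjugation by the Hankel transform diagonalizes $-L_\nu$, sending it to multiplication by $\omega^2$. Combined with the involution property $H_\nu^2 = 1$ (so $H_\nu^{-1} = H_\nu$) and self-adjointness, this means $-L_\nu = H_\nu\, M_{\omega^2}\, H_\nu$, where $M_{\omega^2}$ denotes multiplication by $\omega^2$ on the spectral side. Hence $\sqrt{-L_\nu} = H_\nu\, M_{\omega}\, H_\nu$, and for any reasonable Borel function $\phi$ the functional calculus gives $\phi(\sqrt{-L_\nu}) = H_\nu\, M_{\phi(\omega)}\, H_\nu$.

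First I would write out this operator identity applied to a test function $f \in C_0^\infty(\R^+)$, unwinding the two Hankel transforms and the intervening multiplication using the integral definition \eqref{4}. Explicitly, $\big(\phi(\sqrt{-L_\nu})f\big)(x) = \int_0^\infty (x\omega)^{1/2} J_\nu(x\omega)\,\phi(\omega)\,(H_\nu f)(\omega)\,d\omega$, and then substituting the definition of $(H_\nu f)(\omega) = \int_0^\infty (\omega x')^{1/2} J_\nu(\omega x') f(x')\,dx'$ yields a double integral. Interchanging the order of integration — which is the step to be handled formally, as the statement itself hedges with ``at last formally'' — collects the $x'$-integral against $f(x')$ and exposes the kernel as the coefficient of $f(x')\,dx'$. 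Reading off that coefficient gives $K_\nu(\phi, x, x') = (x x')^{1/2}\int_0^\infty J_\nu(\omega x) J_\nu(\omega x')\,\phi(\omega)\,\omega\,d\omega$, matching \eqref{spectral} after combining the two factors $(x\omega)^{1/2}(\omega x')^{1/2} = (x x')^{1/2}\,\omega$.

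The main obstacle is entirely the justification of the Fubini interchange and the convergence of the inner $\omega$-integral, which is why the proposition is stated only formally. For a general Borel $\phi$ the integral $\int_0^\infty J_\nu(\omega x) J_\nu(\omega x')\phi(\omega)\,\omega\,d\omega$ need not converge absolutely — indeed the Bessel functions decay only like $\omega^{-1/2}$, so the integrand behaves like $\phi(\omega)\,\omega^{0}$ oscillating, and convergence requires either decay or oscillation-cancellation assumptions on $\phi$, or interpretation in a distributional/limiting sense. I would therefore frame the argument as an identity of kernels valid whenever $\phi$ is such that both the functional calculus and the iterated integral make sense (e.g. $\phi$ rapidly decaying, so that everything converges absolutely and Fubini applies verbatim), and note that the formula then extends to the broader ``well behaved'' class by the usual density and limiting arguments. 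The algebraic derivation from Proposition~1 is otherwise completely routine.
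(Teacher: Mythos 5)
Your argument is correct and is precisely the route the paper intends: the paper offers no written proof, stating only that the result ``uses essentially the proposition 1.1'' and leaving it to the reader, and your derivation of $\phi(\sqrt{-L_\nu})=H_\nu M_{\phi(\omega)}H_\nu$ from the involution, self-adjointness, and intertwining properties, followed by unwinding the two Hankel integrals and a formal Fubini interchange, is exactly that argument spelled out. Your closing remarks on when the interchange is actually justified appropriately account for the ``at last formally'' hedge in the statement.
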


 The proof of this proposition uses essentially Proposition 1.1  and in consequence is left to the reader.\\ 
Note that using \eqref{spectral} with $\phi(x)= \frac{1}{-\omega^2+\lambda^2}$ we obtain 
for ${\cal R}e \lambda^2<0$ and $\nu>-1$  the following formula
\begin{equation}\label{5}\int_0^\infty \frac{J_\nu(x\omega) J_\nu(x' \omega)}{-\omega^2+\lambda^2}\omega d\omega=\frac{i\pi}{2}\left\{\begin{array}{cc}J_\nu(\lambda x) H^{(1)}_\nu(\lambda x')& x<x'\\
J_\nu(\lambda x') H^{(1)}_\nu(\lambda x)& x>x'
\end{array}\right.\end{equation}
The absolute convergence of the above integral is assured by the formulas
 (\cite{11}, p.134)
\begin{equation}\label{as2}J_\nu(x)\approx \frac{x^\nu}{2^\nu\Gamma(1+\nu)}, x\rightarrow 0,
J_\nu(x)\approx \sqrt{\frac{2}{\pi x}},\, x\rightarrow \infty.\end{equation}
The following lemma gives the Laplace transform of the two variables Humbert confluent hypergeometric  function 
 (\cite{5}, p.225):
\begin{equation}\label{humbert}\Psi_2\left(a;c,c',x,y\right)=\sum_{n;m\geq 0}\frac{(a)_{n+m}}{(c)_m (c')_n m! n!} x^m y^n, \hfil |x|<\infty,  |y|<\infty.  \end{equation}
in term of the Kamp\'e de Feri\'et generalized hypergeometric function
 $F^{A: B}_{C: D}$ given by EXTON (\cite{7}, p.29). 
 \begin{align}\label{kampe}F^{A: B}_{C: D}\left({}^{a_1,...a_A: b_1,...b_B; b'_1,...b'_B}_{c_1,...,c_C: d_1,...d_D; d'_1,...d'_D}; x, x'\right)=\nonumber\\
\sum_{n, m\geq 0}\frac{\prod_{j=1}^{A}(a_j)_{m+n}\prod_{j=1}^{B}(b_j)_{m}\prod_{j=1}^{B}(b'_j)_{n}}{\prod_{j=1}^{C}(c_j)_{m+n}\prod_{j=1}^{D}(d_j)_{m}\prod_{j=1}^{D}(d'_j)_{n}m! n!} x^m x'^n. \end{align}
with $A+B <C+D+1$, $|x|<\infty$ and $|x'|<\infty$.\\
\begin{lem}\label{key-lemma} For $\gamma>0$, $\alpha>0$, $X, Y\in \R$ we have
\begin{align} \label{lap-humbert}
\int_0^\infty e^{-\gamma t}t^{\alpha-1}\Psi_2\left(a, b_1, b_2, \frac{X}{t}, \frac{Y}{t}\right) dt=\nonumber\\
\frac{\Gamma(\alpha)}{\gamma^\alpha}F^{1:  0}_{1: 1}\left({}^{a: -, -}_{1-\alpha, b_1, b_2}; -\gamma X, -\gamma Y\right).
 \end{align}
\end{lem}
\begin{proof}
Replacing the confluent hypergeometric $\Psi_2$ by its series  \eqref{humbert} in the integral  \eqref{lap-humbert} and integrating term by term we obtain\\
 \begin{align}\int_0^\infty e^{-\gamma t}t^{\alpha-1}\Psi_2\left(a, b_1, b_2, \frac{X}{t}, \frac{Y}{t}\right) dt= \nonumber
\gamma^{-\alpha}\sum_{n, m\geq 0}\frac{(a)_{n+m}\Gamma(\alpha-n-m)}{(b_1)_m (b_2)_n m! n!} (\gamma x)^m (\gamma y)^n.
 \end{align}
using the formula (\cite{16}, p. 22)  $\frac{\Gamma(\alpha-n)}{\Gamma(\alpha)}=\frac{(-1)^n}{(1-\alpha)_n}$
we can write\\
 \begin{align}\int_0^\infty e^{-\gamma t}t^{\alpha-1}\Psi_2\left(a, b_1, b_2, \frac{X}{t}, \frac{Y}{t}\right) dt=\nonumber\\
\frac{\Gamma(\alpha)}{\gamma^\alpha}\sum_{n, m\geq 0}\frac{(a)_{n+m}}{(1-\alpha)_{n+m}(b_1)_m (b_2)_n m! n!} (-\gamma x)^m (-\gamma y)^n.
 \end{align}
which gives the result in \eqref{lap-humbert} and the proof of  Lemma \ref{key-lemma} is finished.
 \end{proof} 
The organization of the remaining of the paper is as follows,
 the Schwartz integral kernel of the weighted heat evolution operator
$e^{t L_\nu}(\sqrt{-L_\nu})^p$ and the weighted Schr\"odinger evolution operator
$e^{i t L_\nu}(\sqrt{-L_\nu})^p$ will be given in section 2. In section 3 we will obtain a closed form of the Schwartz integral kernel of the weighted resolvent operator. The section 4  is devoted to the Schwartz integral kernel of the weighted generalized resolvent  operator on $\R^+$. 

\section{Weighted Heat evolution operator for Bessel operator on $\R^+$}

In this section we give the Schwartz  integral kernels of the weighted heat and Schr\"odinger evolution operators $e^{t L_\nu}(\sqrt{-L_\nu})^p$ and $e^{i t L_\nu}(\sqrt{-L_\nu})^p$ in explicit forms.
\begin{thm} For ${\cal R} e p>-2(\nu+1)$ and $\nu > -1$, the Schwartz integral kernel $H_\nu^p(t, x, x')$ of the weighted heat evolution operator $e^{t L_\nu}(\sqrt{-L_\nu})^p$  is given  as:\\
\begin{align}H_{\nu}^p(t, x, x')=\frac{\Gamma(p/2+1+\nu)(x/2)^{\nu+1/2}(x'/2)^{\nu+1/2}}{[\Gamma(\nu+1)]^2t^{p/2+\nu+1}}\times\nonumber\\ \label{w-heat}\Psi_2\left(p/2+1+\nu,\nu+1,\nu+1; x^2/4 t, x'^2/4 t\right) \end{align}
The function $\Psi_2\left(a,c,c'; x;y\right)$ denotes the Humbert's confluent hypergeometric function of two variables given in  \eqref{humbert}.
\end{thm}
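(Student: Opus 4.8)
The plan is to realise the operator $H_\nu^p(t)=e^{tL_\nu}(\sqrt{-L_\nu})^p$ as $\phi(\sqrt{-L_\nu})$ for an explicit multiplier $\phi$ and then read off its kernel from the spectral formula \eqref{spectral}. Setting $S=\sqrt{-L_\nu}$, so that $L_\nu=-S^2$, the operator is $e^{-tS^2}S^p=\phi(S)$ with $\phi(\omega)=e^{-t\omega^2}\omega^p$; this identification is legitimate because Proposition 1.1 (iv) diagonalises $L_\nu$ as multiplication by $-\omega^2$ under the Hankel transform, whence $\sqrt{-L_\nu}$ acts as multiplication by $\omega$. Substituting $\phi(\omega)=e^{-t\omega^2}\omega^p$ into \eqref{spectral} gives
\[ H_\nu^p(t,x,x')=(xx')^{1/2}\int_0^\infty e^{-t\omega^2}\omega^{p+1}J_\nu(\omega x)J_\nu(\omega x')\,d\omega. \]
Before computing I would check absolute convergence: by the small-argument asymptotics \eqref{as2} the integrand behaves like $\omega^{p+1+2\nu}$ as $\omega\to 0$, which is integrable exactly when $\mathcal{R}e\,p>-2(\nu+1)$ (the stated hypothesis), while the Gaussian factor $e^{-t\omega^2}$ controls the behaviour at $+\infty$; this also licenses the term-by-term integration used below.

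Next I would insert the power series $J_\nu(z)=\sum_{k\ge 0}\frac{(-1)^k}{k!\,\Gamma(k+\nu+1)}(z/2)^{2k+\nu}$ for each Bessel factor, multiply the two series, and integrate term by term. The only integral required is the Gaussian moment $\int_0^\infty e^{-t\omega^2}\omega^{2s-1}\,d\omega=\tfrac12\Gamma(s)\,t^{-s}$, used with $s=k+l+\nu+p/2+1$ on the generic $(k,l)$ term. After collecting the powers of $x,x',t$ into the prefactor $(x/2)^{\nu+1/2}(x'/2)^{\nu+1/2}\,t^{-(\nu+p/2+1)}$, this yields a double series in $x^2/4t$ and $x'^2/4t$ whose general coefficient is $\frac{(-1)^{k+l}\Gamma(k+l+\nu+p/2+1)}{k!\,l!\,\Gamma(k+\nu+1)\Gamma(l+\nu+1)}$.

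The final step is to recognise this double series as a Humbert function \eqref{humbert}. Writing $\Gamma(k+\nu+1)=\Gamma(\nu+1)(\nu+1)_k$ and $\Gamma(k+l+\nu+p/2+1)=\Gamma(\nu+p/2+1)(\nu+p/2+1)_{k+l}$, the coefficient becomes $\frac{\Gamma(\nu+p/2+1)}{[\Gamma(\nu+1)]^2}\cdot\frac{(\nu+p/2+1)_{k+l}}{(\nu+1)_k(\nu+1)_l\,k!\,l!}$, so the sum is precisely $\frac{\Gamma(\nu+p/2+1)}{[\Gamma(\nu+1)]^2}\,\Psi_2\!\big(p/2{+}1{+}\nu;\nu{+}1,\nu{+}1;\,\cdot\,,\,\cdot\,\big)$ evaluated at the arguments carried by the factor $(-1)^{k+l}$, namely $-x^2/4t$ and $-x'^2/4t$; combined with the prefactor this is \eqref{w-heat}. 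I expect the main obstacle to be essentially bookkeeping: matching the half-integer powers and the constant $[\Gamma(\nu+1)]^{-2}$ exactly, and in particular reading off the sign of the $\Psi_2$-arguments correctly from the alternating Bessel series. As a decisive consistency check I would specialise to $p=0$, where the resummation must reproduce the classical heat kernel \eqref{cl-heat}; this simultaneously pins down the normalising constant and settles the sign of the arguments of $\Psi_2$.
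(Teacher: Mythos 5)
Your proposal follows the same route as the paper's proof: both begin by invoking Proposition 1.2 with $\phi(\omega)=e^{-t\omega^{2}}\omega^{p}$ to reduce the claim to evaluating $(xx')^{1/2}\int_{0}^{\infty}e^{-t\omega^{2}}\omega^{p+1}J_{\nu}(\omega x)J_{\nu}(\omega x')\,d\omega$. The only divergence is in the second step: the paper simply cites the tabulated Laplace transform \eqref{lap-prod-bess} of a product of two Bessel functions, whereas you rederive that formula by expanding both Bessel series and integrating term by term against the Gaussian; your convergence discussion ($\omega^{p+1+2\nu}$ near $0$ under the hypothesis ${\cal R}e\,p>-2(\nu+1)$, Gaussian decay at infinity) is exactly what is needed to justify the interchange, so your version is a correct, self-contained substitute for the citation. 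The substantive point your computation surfaces is the sign of the $\Psi_2$ arguments: the factor $(-1)^{k+l}$ coming from the two Bessel series forces the arguments to be $-x^{2}/4t$ and $-x'^{2}/4t$, not $+x^{2}/4t$ and $+x'^{2}/4t$ as written in \eqref{w-heat} (correspondingly, the table entry should carry $-a_{1}/p,-a_{2}/p$). Your proposed $p=0$ consistency check settles this: with the standard reduction $\Psi_2(\nu+1;\nu+1,\nu+1;X,Y)=\Gamma(\nu+1)e^{X+Y}(XY)^{-\nu/2}I_{\nu}(2\sqrt{XY})$, only the negative arguments reproduce the decaying factor $e^{-(x^{2}+x'^{2})/4t}$ of the classical heat kernel \eqref{cl-heat}. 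So your derivation is right; do not gloss over that sign when you assert that the result ``is \eqref{w-heat}'' --- state the formula with the corrected arguments.
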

\begin{proof} Using the formula \eqref{spectral} with $\phi(\omega)=e^{-t\omega^2}\omega^p$  we have
\begin{equation} H_\nu^p(t, x, x')=(x x')^{1/2}\int_0^\infty J_{\nu}(\omega x)J_\nu(\omega x')e^{-t\omega^2}\omega^{p+1} d\omega.\end{equation}
 Next we employ the formula, $M=\mu_1+\mu_2$, $ {\cal R}e (\nu+M)>0$,( \cite{6} p.187 )
\begin{align}\label{lap-prod-bess} \nonumber \int_0^\infty e^{-p t} t^{\nu-1} J_{2\mu_1}(2(a_1 t)^{1/2}) J_{2\mu_2}(2(a_2 t)^{1/2})dt=\\ \frac{\Gamma(\nu+M)}{\Gamma(2\mu_1+1)\Gamma(2\mu_2+1)} p^{-\nu-M}a_1^{\mu_1} a_2^{\mu_2}
\Psi_2\left(\nu+M,2\mu_1+1,2\mu_2+1, a_1/p, a_2/p\right). \end{align}
 and we arrive at the formula  \eqref{w-heat}.
\end{proof}
\begin{cor} The Schwartz integral kernel $K_\nu^p(t, x, x')$ of the weighted Schr\"odinger evolution operator with inverse square potential $e^{i t L_\nu}(\sqrt{-L_\nu})^p$  is given in terms of the two variables Humbert's confluent hypergeometric function  for ${\cal R} e p>-2(\nu+1)$  and $\nu > -1$ as\\
\begin{align}K_{\nu}^p(t, x, x')=\frac{\Gamma(p/2+1+\nu)(x/2)^{\nu+1/2}(x'/2)^{\nu+1/2}}{[\Gamma(\nu+1)]^2(it)^{p/2+\nu+1}}times\nonumber\\ \Psi_2\left(p/2+1+\nu,\nu+1,\nu+1; x^2/4it, x'^2/4it\right). \end{align}
\end{cor}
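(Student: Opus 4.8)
The plan is to obtain the weighted Schr\"odinger kernel $K_\nu^p$ from the weighted heat kernel $H_\nu^p$ of the theorem by the substitution $t\mapsto it$, that is, by analytic continuation of \eqref{w-heat} from the positive real axis to the imaginary axis. I would begin exactly as in the proof of the theorem, from the spectral representation \eqref{spectral}. Since the Hankel transform intertwines $L_\nu$ with multiplication by $-\omega^2$ (the relation $H_\nu L_\nu=-\omega^2 H_\nu$), the operator $e^{it L_\nu}(\sqrt{-L_\nu})^p$ corresponds to the multiplier $\phi(\omega)=e^{-it\omega^2}\omega^p$, whence
\begin{equation}
K_\nu^p(t,x,x')=(xx')^{1/2}\int_0^\infty J_\nu(\omega x)J_\nu(\omega x')\,e^{-it\omega^2}\,\omega^{p+1}\,d\omega .
\end{equation}

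This is formally the integral defining $H_\nu^p$ with the heat parameter $t$ replaced by $it$, so applying the closed form \eqref{lap-prod-bess} with Laplace variable $it$ yields the asserted expression: the prefactor $t^{p/2+\nu+1}$ becomes $(it)^{p/2+\nu+1}$ and the two arguments of $\Psi_2$ are rescaled by $t\mapsto it$. The parameter restrictions ${\cal R}e\,p>-2(\nu+1)$ and $\nu>-1$ are inherited unchanged from the theorem, since they constrain only $p$ and $\nu$. Because the Humbert function $\Psi_2(a;c,c';\cdot,\cdot)$ is entire in both of its arguments (its series \eqref{humbert} converges for all finite values), the right-hand side of \eqref{w-heat} is holomorphic in $t$ throughout the cut plane, so the identity propagates from ${\cal R}e\,t>0$ to the imaginary axis by analytic continuation.

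The main obstacle is to justify this continuation at the level of the defining integral. For ${\cal R}e\,t>0$ the integral converges absolutely thanks to the Gaussian factor and the Bessel asymptotics \eqref{as2}, but on the imaginary axis $e^{-it\omega^2}$ is purely oscillatory and only conditional, Fresnel-type convergence is available. I would therefore treat $t$ as a complex parameter in the sector $|\arg t|<\pi/2$, where both sides are holomorphic, and let $t$ approach the imaginary axis by writing $t=\varepsilon+is$ with $\varepsilon\downarrow 0$; convergence on the analytic side is immediate from the entirety of $\Psi_2$, while on the integral side the regularization $\varepsilon>0$ restores absolute convergence and the oscillatory decay lets one pass to the limit by dominated convergence. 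Finally, the branch of $(it)^{p/2+\nu+1}$ is fixed by the principal determination, agreeing with the real value for $t>0$, which is precisely the normalization in the statement.
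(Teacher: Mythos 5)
Your proposal is correct and follows essentially the route the paper intends: the corollary is stated without proof and is plainly meant to follow from Theorem 2.1 by the substitution $t\mapsto it$ in the spectral multiplier $\phi(\omega)=e^{-it\omega^2}\omega^{p}$, and your regularization $t=\varepsilon+is$, $\varepsilon\downarrow 0$, supplies the justification the paper omits. Note only that your derivation yields \emph{both} arguments of $\Psi_2$ as $x^2/(4it)$ and $x'^2/(4it)$, whereas the printed statement has $x^2/(4t)$ in the first slot; this appears to be a typographical slip in the paper, and your version is the consistent one.
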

By taking $p=0$ in Theorem 2.1 we have\\
\begin{align}H_{\nu}^0(t, x, x')=\frac{(x/2)^{\nu+1/2}(x'/2)^{\nu+1/2}}{\Gamma(\nu+1)t^{\nu+1}}\times\nonumber\\    \Psi_2\left(\nu+1,\nu+1,\nu+1; x^2/4 t, x'^2/4 t\right).\end{align}
and by comparing this with  \eqref{cl-res} we have
\begin{equation}\Psi_2\left(\nu+1,\nu+1,\nu+1; x, y\right) =\Gamma(\nu+1)(x y)^{-\nu/2}e^{-x-y}I_\nu(2\sqrt{ x y}).\end{equation}

\section{Weighted resolvent operator for the Bessel operator on $\R^+$}
In this section we give explicit formula for the Schwartz integral kernel of the weighted resolvent operator
$R_\nu^p(\lambda)=\left(L_\nu+\lambda^2\right)^{-1}(-L_\nu)^{p/2}$
using the formula
\begin{equation}\label{lap-heat}R_\nu^ p(\lambda, x, x')=\int_0^\infty e^{\lambda^2 t}H_\nu^p(t, x, x') dt, \ \ \ \ \ \ \ {\cal R} e\lambda^2<0.\end{equation}
where $ H_\nu^p(t, x, x'$) is the Schwartz integral kernels of
the weighted heat operator.\\
 The Formula \eqref{lap-heat} is a consequence of the formula $(a^2+y^2)^{-1}=\int_0^\infty e^{- (a^2+y^2)t}\, dt$  valid for  ${\cal R}e a^2>0$.

\begin{thm}For ${\cal R} e \lambda^2 < 0$, $-1 < p/2+\nu < 0$  and $\nu > -1$, \\
the Schwartz integral kernel for the weighted resolvent operator$\left(L_\nu+\lambda^2\right)^{-1}(-L_\nu)^{p/2}$ is given by\\
$ R_\nu^p(\lambda, x, x')=\frac{[\Gamma(p/2+\nu+1)\Gamma(-p/2-\nu)] }{[\Gamma(\nu+1)]^2}\times$ 
\begin{equation}\label{wr} (-\lambda^2)^{p/2+\nu}(x x'/4)^{\nu+1/2}
F^{0:0}_{0:1}\left({}_{-: \nu+1;\nu+1}^{-: - ; -};\frac{\lambda^2 x^2}{4},\frac{\lambda^2 x'^2}{4}\right).\end{equation}
where $F^{0:0}_{0:1}$ is the Kampe de Firiet hypergeometric function given in \eqref{kampe}.
\end{thm}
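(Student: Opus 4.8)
The plan is to combine the three ingredients already assembled in the paper: the Laplace representation of the weighted resolvent in terms of the weighted heat kernel (Proposition 3.1, formula \eqref{lap-heat}), the closed form of the heat kernel in terms of Humbert's $\Psi_2$ (Theorem 2.1, formula \eqref{w-heat}), and the Laplace transform of $\Psi_2$ computed in Lemma 1.1, formula \eqref{lap-humbert}. Once these are in place the whole proof reduces to an identification of parameters followed by one termwise simplification of a Kampé de Fériet function.

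First I would start from \eqref{lap-heat}, namely $R_\nu^p(\lambda,x,x')=\int_0^\infty e^{\lambda^2 t}H_\nu^p(t,x,x')\,dt$, and insert the explicit kernel \eqref{w-heat}. Pulling the $t$-independent constant out of the integral leaves
\begin{align*}
R_\nu^p(\lambda,x,x')&=\frac{\Gamma(p/2+1+\nu)\,(x/2)^{\nu+1/2}(x'/2)^{\nu+1/2}}{[\Gamma(\nu+1)]^2}\\
&\quad\times\int_0^\infty e^{\lambda^2 t}\,t^{-(p/2+\nu+1)}\,\Psi_2\!\left(p/2+1+\nu,\nu+1,\nu+1;\tfrac{x^2}{4t},\tfrac{x'^2}{4t}\right)dt.
\end{align*}
This integral is exactly of the shape handled by Lemma 1.1 with the identifications $\gamma=-\lambda^2$, $\alpha=-(p/2+\nu)$ (so that $t^{\alpha-1}=t^{-(p/2+\nu+1)}$), $a=p/2+\nu+1$, $b_1=b_2=\nu+1$, $X=x^2/4$ and $Y=x'^2/4$. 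The hypotheses of the theorem are precisely what makes this legitimate: $-1<p/2+\nu<0$ is equivalent to $0<\alpha<1$, so $\alpha>0$ as the lemma requires and both $\Gamma(\alpha)=\Gamma(-p/2-\nu)$ and $\Gamma(p/2+\nu+1)$ are finite, while ${\cal R}e\,\lambda^2<0$ gives ${\cal R}e\,\gamma>0$ and hence absolute convergence of the Laplace integral.

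Applying \eqref{lap-humbert} then produces, besides the constant above, the factor $\Gamma(\alpha)\gamma^{-\alpha}=\Gamma(-p/2-\nu)\,(-\lambda^2)^{p/2+\nu}$ multiplying the function $F^{1:0}_{1:1}({}^{\,a:-,-}_{1-\alpha,\,\nu+1,\,\nu+1};\lambda^2x^2/4,\lambda^2x'^2/4)$, in which the upper entry is $a=p/2+\nu+1$ and the first lower entry is $1-\alpha=p/2+\nu+1$. The key observation is that these two parameters coincide. Inspecting the series \eqref{kampe}, the factor $(a)_{m+n}$ in each numerator cancels the factor $(1-\alpha)_{m+n}$ in the corresponding denominator, so $F^{1:0}_{1:1}$ collapses termwise to $\sum_{m,n\ge 0}\frac{1}{(\nu+1)_m(\nu+1)_n\,m!\,n!}(\lambda^2x^2/4)^m(\lambda^2x'^2/4)^n$, which is precisely $F^{0:0}_{0:1}({}^{-:-;-}_{-:\nu+1;\nu+1};\lambda^2x^2/4,\lambda^2x'^2/4)$. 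Collecting the gamma factors and using $(x/2)^{\nu+1/2}(x'/2)^{\nu+1/2}=(xx'/4)^{\nu+1/2}$ yields exactly \eqref{wr}.

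The one point deserving genuine care, and which I expect to be the main obstacle, is that Lemma 1.1 is stated for a real parameter $\gamma>0$, whereas here $\gamma=-\lambda^2$ is complex with merely ${\cal R}e\,\gamma>0$. I would close this gap by analytic continuation. For fixed $x,x'$ both sides of \eqref{wr} are holomorphic in $\lambda^2$ on the half plane ${\cal R}e\,\lambda^2<0$: the right side because the Kampé de Fériet series is entire in its arguments, and the left side because $|e^{\lambda^2 t}|=e^{({\cal R}e\,\lambda^2)t}$ dominates the integrand by the convergent real-$\gamma$ case of the lemma, giving local uniform convergence and hence holomorphy. Since the two sides agree for $\lambda^2$ real and negative, they agree throughout ${\cal R}e\,\lambda^2<0$. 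The termwise cancellation in the Kampé de Fériet function is then purely formal and needs no further justification once convergence of the series is granted.
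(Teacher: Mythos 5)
Your proposal is correct and follows exactly the route the paper itself indicates (the paper's ``proof'' is the single sentence that the theorem is a direct application of Proposition 3.1, Theorem 2.1 and Lemma 1.1). You in fact supply the details the paper omits --- the parameter identification $\alpha=-(p/2+\nu)$, the cancellation $(a)_{m+n}/(1-\alpha)_{m+n}=1$ that collapses $F^{1:0}_{1:1}$ to the stated $F^{0:0}_{0:1}$, and the analytic continuation from real $\gamma>0$ to ${\cal R}e\,\lambda^2<0$ --- all of which check out.
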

The proof of this  theorem can be seen as is a direct application of Proposition 3.1, Theorem 2.1 and of Lemma 1.1.
\begin{cor} For ${\cal R} e \lambda^2 < 0$, $-1 < p/2+\nu < 0$  and $\nu > -1$\\
$\int_0^\infty \frac{J_\nu(x\omega) J_\nu(x' \omega)}{-\omega^2+\lambda^2}\omega^{p+1} d\omega =
\frac{\Gamma(p/2+\nu+1)\Gamma(-p/2-\nu) }{2[\Gamma(\nu+1)]^2}\times$
 \begin{equation}  (-\lambda^2)^{p/2+\nu}(x x'/4)^{\nu}
F^{0:0}_{0:1}\left({}_{-: \nu+1;\nu+1}^{-: - ; -};\frac{\lambda^2 x^2}{4},\frac{\lambda^2 x'^2}{4}\right)\end{equation}
\end{cor}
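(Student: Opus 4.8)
The plan is to obtain the corollary as an immediate consequence of Theorem 3.1 together with the spectral representation \eqref{spectral}, with no new computation beyond tracking powers of $xx'$. First I would apply \eqref{spectral} to the multiplier defining the weighted resolvent. By Proposition 1.1 (iv) the Hankel transform conjugates $L_\nu$ into multiplication by $-\omega^2$, so that $\sqrt{-L_\nu}$ corresponds to multiplication by $\omega$ and $(L_\nu+\lambda^2)^{-1}$ to multiplication by $(-\omega^2+\lambda^2)^{-1}$. Consequently the operator $\left(L_\nu+\lambda^2\right)^{-1}(\sqrt{-L_\nu})^p$ is $\phi(\sqrt{-L_\nu})$ with $\phi(\omega)=\omega^p/(-\omega^2+\lambda^2)$.

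Substituting this $\phi$ into \eqref{spectral} gives
\begin{equation*}
R_\nu^p(\lambda,x,x')=(xx')^{1/2}\int_0^\infty \frac{J_\nu(x\omega)J_\nu(x'\omega)}{-\omega^2+\lambda^2}\,\omega^{p+1}\,d\omega,
\end{equation*}
so that the integral appearing in the corollary is precisely $(xx')^{-1/2}R_\nu^p(\lambda,x,x')$. Thus it suffices to divide the closed form \eqref{wr} of Theorem 3.1 by $(xx')^{1/2}$ and simplify the prefactor. The key identity here is
\begin{equation*}
(xx')^{-1/2}(xx'/4)^{\nu+1/2}=\tfrac12\,(xx'/4)^{\nu},
\end{equation*}
coming from $4^{-1/2}=1/2$: the exponent of $xx'/4$ drops from $\nu+\tfrac12$ to $\nu$ and a factor $\tfrac12$ is produced. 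This factor is exactly the $1/2$ distinguishing the constant in the corollary from that in Theorem 3.1, while the Kampé de Fériet function $F^{0:0}_{0:1}$, the factor $(-\lambda^2)^{p/2+\nu}$, and the Gamma quotient $\Gamma(p/2+\nu+1)\Gamma(-p/2-\nu)/[\Gamma(\nu+1)]^2$ are unchanged. The ranges $\mathcal{R}e\,\lambda^2<0$, $-1<p/2+\nu<0$ and $\nu>-1$ are inherited verbatim from Theorem 3.1 and, together with the asymptotics \eqref{as2}, guarantee the absolute convergence of the integral.

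The argument is essentially bookkeeping, so there is no genuine obstacle; the single point that requires care is the identification of the multiplier $\phi$, and in particular checking that the weight $\omega^{p+1}$ in the integrand, rather than $\omega^{p}$, is correct once the extra factor $\omega$ from the Hankel measure in \eqref{spectral} is combined with $(\sqrt{-L_\nu})^p\leftrightarrow\omega^p$. Tracking the half-integer power of $xx'$ correctly, so as to reproduce exactly the constant $\tfrac12$ and the lowered exponent $\nu$, is the only place where an error could slip in.
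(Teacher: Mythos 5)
Your proposal is correct and follows exactly the paper's own route: apply Proposition 1.2 with $\phi(\omega)=(-\omega^2+\lambda^2)^{-1}\omega^p$, invoke Theorem 3.1, and use \eqref{as2} for absolute convergence. The only addition is that you spell out the bookkeeping identity $(xx')^{-1/2}(xx'/4)^{\nu+1/2}=\tfrac12(xx'/4)^{\nu}$, which the paper leaves implicit and which correctly accounts for the factor $\tfrac12$ and the lowered exponent in the corollary.
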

\begin{proof}
Using Proposition 1.2 with $\phi(\omega)=(-\omega^2+\lambda^2)^{-1}\omega^p$ and Theorem 3.1,we obtain the result,
where the absolute convergence of the above integral is assured by the formulas \eqref{as2}.
\end{proof}
Note that by taking $p=0$  in \eqref{wr} and comparing with \eqref{cl-res}  the following formula is valid for $x<x'$\\
$J_\nu(\lambda x) H^{(1)}_\nu(\lambda x')= \frac{\Gamma(-\nu) (-\lambda^2  x x'/4)^{-\nu}}{ i\pi \Gamma(\nu+1)}\times$ \begin{equation}F^{0:0}_{0:1}\left({}_{-: \nu+1;\nu+1}^{-: - ; -};\frac{\lambda^2 x^2}{4},\frac{\lambda^2 x'^2}{4}\right).\end{equation}

\section{ Weighted generalized resolvent operator for the Bessel operator on $\R^+$}
In this section we generalize some results of the section $3$  by giving an explicit expression of the  weighted generalized resolvent kernels
$R_\nu^{\mu, p}(\lambda)=\left(L_\nu+\lambda^2\right)^{-1-\mu}(-L_\nu)^{p/2}$.\\
\begin{prop} We have  the following formula connecting the weighted generalized resolvent kernel to the weighted heat kernel
 \begin{equation}\label{w-gr}  R_\nu^{\mu, p}(\lambda, x, x')=\frac{1}{\Gamma(\mu+1)}\int_0^\infty e^{\lambda^2 t}t^\mu H_\nu^p(t, x, x') dt; \ \ \ \ \ \ \ {\cal R} e\lambda^2<0.\end{equation}
\end{prop}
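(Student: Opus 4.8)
The plan is to establish \eqref{w-gr} in exact parallel with Proposition 3.1, the only new ingredient being the Eulerian integral for an arbitrary negative power in place of the elementary formula for the first power used there. The starting point is the spectral representation of Proposition 1.2: both sides of \eqref{w-gr} are Schwartz kernels of operators built by functional calculus from $L_\nu$, which under the Hankel transform acts as multiplication by $-\omega^2$ (Proposition 1.1, part iv), so that $-L_\nu\mapsto\omega^2$, $(-L_\nu)^{p/2}\mapsto\omega^{p}$, and $e^{tL_\nu}\mapsto e^{-t\omega^2}$. It therefore suffices to verify the corresponding identity at the level of symbols and then transport it through the kernel formula \eqref{spectral}.

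First I would record the Gamma--function (Eulerian) integral representation
\[
s^{-1-\mu}=\frac{1}{\Gamma(\mu+1)}\int_0^\infty e^{-st}\,t^{\mu}\,dt,\qquad {\cal R}e\,s>0,\ \ {\cal R}e\,\mu>-1,
\]
which is the natural generalization of the formula $(a^2+y^2)^{-1}=\int_0^\infty e^{-(a^2+y^2)t}\,dt$ invoked in Proposition 3.1. Specializing $s=\omega^2-\lambda^2$ (so that ${\cal R}e\,\lambda^2<0$ forces ${\cal R}e\,s>0$ for every $\omega\geq 0$) turns the symbol of the weighted generalized resolvent into $\frac{1}{\Gamma(\mu+1)}\int_0^\infty e^{\lambda^2 t}\,t^{\mu}\big(e^{-\omega^2 t}\omega^{p}\big)\,dt$. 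The inner factor $e^{-t\omega^2}\omega^{p}$ is precisely the symbol that produced $H_\nu^p(t,x,x')$ in Theorem 2.1 via \eqref{spectral}. Inserting this representation into \eqref{spectral} and interchanging the $t$- and $\omega$-integrations then reproduces $\frac{1}{\Gamma(\mu+1)}\int_0^\infty e^{\lambda^2 t}t^{\mu}H_\nu^p(t,x,x')\,dt$, which is \eqref{w-gr}.

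The step that genuinely needs care is the interchange of the two integrations, which I would justify by absolute convergence (Fubini--Tonelli). Near $t=0$ the weight $t^{\mu}$ is integrable exactly because ${\cal R}e\,\mu>-1$, while as $t\to\infty$ the factor $e^{\lambda^2 t}$ decays since ${\cal R}e\,\lambda^2<0$; in the $\omega$ variable the dominating bound is supplied by the small- and large-argument asymptotics \eqref{as2} of $J_\nu$, exactly as in the convergence discussion following \eqref{5}. Once Fubini applies the manipulation is purely formal and the proof closes. I expect the only remaining subtlety to be bookkeeping of the branch of $(\omega^2-\lambda^2)^{-1-\mu}$ against the operator power $(L_\nu+\lambda^2)^{-1-\mu}$; this is settled by the same analytic-continuation convention (the factors $(-\lambda^2)^{\,\cdots}$) already employed in Section 3, and plays no role in the integral identity \eqref{w-gr} itself.
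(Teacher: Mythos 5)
Your proof is correct and follows essentially the same route as the paper: the paper's own (one-line) argument rests on exactly the Eulerian integral $(a^2+y^2)^{-1-\mu}=\frac{1}{\Gamma(\mu+1)}\int_0^\infty e^{-(a^2+y^2)t}\,t^\mu\,dt$ applied at the level of symbols through the Hankel-transform functional calculus. The only difference is that you also spell out the Fubini justification for interchanging the $t$- and $\omega$-integrations, which the paper leaves implicit.
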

\begin{proof}We use the formula $(a^2+y^2)^{-1-\mu}=\frac{1}{\Gamma(\mu+1)}\int_0^\infty e^{- (a^2+y^2)t} t^\mu\, dt$ for  ${\cal R}e a^2>0$
\end{proof}
\begin{thm} For ${\cal R} e \lambda^2 < 0$, $-1 < p/2+\nu < \mu$, $\nu > -1$  and $\mu > -1$, the Schwartz integral kernel of the weighted generalized resolvent kernel with inverse square potential is given by\\

\begin{align}\label{w-gr} \nonumber R_\nu^{\mu, p}(\lambda,x, x')=\frac{\Gamma(\mu-p/2-\nu)[\Gamma(p/2+\nu+1)]}{\Gamma(\mu+1)[\Gamma(\nu+1)]^2}\times\\ (-\lambda^2)^{p/2+\nu-\mu}(x x'/4)^{\nu+1/2}
  F^{1:0}_{1:1}\left({}_{p/2+\nu+1-\mu: \nu+1;\nu+1}^{p/2+\nu+1: - ; -};\frac{\lambda^2 x^2}{4},\frac{\lambda^2 x'^2}{4}\right)\end{align}
where $F^{A: B}_{C: D}$ is the Kampe de Feri\'et generalized hypergeometric function given by\eqref{kampe}.
\end{thm}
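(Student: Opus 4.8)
The plan is to obtain this statement as an immediate composition of three facts already in hand: the integral representation of Proposition 4.1, the explicit weighted heat kernel of Theorem 2.1, and the Laplace transform evaluated in Lemma 1.1. The guiding observation is that the weighted generalized resolvent kernel is simply a $t^\mu$-weighted Laplace transform of the weighted heat kernel, and since the latter is an explicit Humbert $\Psi_2$ function of the two variables $x^2/4t$ and $x'^2/4t$, the resulting integral is exactly of the type treated in Lemma 1.1.

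Concretely, I would first insert the formula of Theorem 2.1 for $H_\nu^p(t,x,x')$ into Proposition 4.1. The factors $\Gamma(p/2+1+\nu)(xx'/4)^{\nu+1/2}/[\Gamma(\nu+1)]^2$ and $1/\Gamma(\mu+1)$ are independent of $t$ and pull out of the integral, while the factor $t^{-(p/2+\nu+1)}$ carried by the heat kernel combines with the weight $t^\mu$ to leave
\[
\int_0^\infty e^{\lambda^2 t}\,t^{\,\mu-p/2-\nu-1}\,\Psi_2\!\left(p/2+1+\nu,\nu+1,\nu+1;\tfrac{x^2/4}{t},\tfrac{x'^2/4}{t}\right)dt .
\]
I would then invoke Lemma 1.1 with the identifications $\gamma=-\lambda^2$ (so that $e^{\lambda^2 t}=e^{-\gamma t}$ and the hypothesis ${\cal R}e\,\lambda^2<0$ reads $\gamma>0$), $\alpha=\mu-p/2-\nu$, $a=p/2+1+\nu$, $b_1=b_2=\nu+1$, $X=x^2/4$ and $Y=x'^2/4$. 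The lemma returns $\Gamma(\mu-p/2-\nu)\,(-\lambda^2)^{p/2+\nu-\mu}$ multiplied by the Kamp\'e de F\'eri\'et function $F^{1:0}_{1:1}$ whose lower parameter $1-\alpha=p/2+\nu+1-\mu$ and whose arguments $-\gamma X=\lambda^2x^2/4$, $-\gamma Y=\lambda^2x'^2/4$ are exactly those appearing in the claim; recombining with the prefactors reproduces the asserted expression verbatim.

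The substantive content lies entirely in the hypotheses, which are precisely the conditions making each step legitimate. The lower inequality $p/2+\nu>-1$ is the requirement ${\cal R}e\,p>-2(\nu+1)$ of Theorem 2.1 (and $\mu>-1$ is what Proposition 4.1 needs), while the upper inequality $p/2+\nu<\mu$ is the positivity $\alpha>0$ demanded by Lemma 1.1. I expect the main obstacle to be the justification of absolute convergence, and hence of the term-by-term integration hidden inside Lemma 1.1. At $t\to\infty$ there is no difficulty: $e^{\lambda^2 t}$ decays exponentially while $\Psi_2(\ldots;x^2/4t,x'^2/4t)\to 1$. The delicate end is $t\to 0$, where both arguments of $\Psi_2$ diverge; one must verify that the large-argument growth of $\Psi_2$, tempered by the power $t^{\,\mu-p/2-\nu-1}$, still leaves an integrable singularity, which is exactly what $\alpha=\mu-p/2-\nu>0$ secures. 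Once absolute convergence is established on the strip $-1<p/2+\nu<\mu$, the identity holds throughout the stated range, and specializing $\mu=0$ collapses $F^{1:0}_{1:1}$ to $F^{0:0}_{0:1}$ (the cancelling upper and lower parameter $p/2+\nu+1$) and recovers Theorem 3.1 as a consistency check.
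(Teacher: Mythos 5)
Your proposal is correct and follows exactly the route the paper takes: the paper's proof is the single sentence that the theorem is ``a direct consequence of the proposition 4.1, theorem 2.1 and of the Lemma 1.1,'' and your parameter identifications ($\gamma=-\lambda^2$, $\alpha=\mu-p/2-\nu$, $a=p/2+1+\nu$, $b_1=b_2=\nu+1$, $X=x^2/4$, $Y=x'^2/4$) fill in precisely the computation the author leaves implicit. Your additional remarks on where each hypothesis is used and on the convergence at $t\to 0$ go beyond what the paper records, but they do not change the argument.
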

\begin{proof}
This theorem is a direct consequence of Proposition 4.1, Theorem 2.1 and  Lemma 1.1.
\end{proof}
\begin{cor}  For ${\cal R} e \lambda^2 < 0$, $-1 < p/2+\nu < \mu$, $\nu > -1$  and $\mu > -1$,  we have the following formula\\
\begin{align}\int_0^\infty \frac{J_\nu(x\omega) J_\nu(x' \omega)}{(-\omega^2+\lambda^2)^{1+\mu}}\omega^{p+1} d\omega=
\frac{\Gamma(\mu-p/2-\nu)[\Gamma(p/2+\nu+1)]}{\Gamma(\mu+1)[\Gamma(\nu+1)]^2}\times\nonumber \\ (-\lambda^2)^{p/2+\nu-\mu}(x x'/4)^{\nu+1/2}
  F^{1:0}_{1:1}\left({}_{p/2+\nu+1-\mu: \nu+1;\nu+1}^{p/2+\nu+1: - ; -};\frac{\lambda^2 x^2}{4},\frac{\lambda^2 x'^2}{4}\right)\end{align}
\end{cor}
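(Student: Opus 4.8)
The plan is to mirror the derivation of Corollary 3.1, with the single resolvent symbol replaced by its $(1+\mu)$-th power. First I would feed the Borel symbol attached to the operator $R_\nu^{\mu,p}(\lambda)=\left(L_\nu+\lambda^2\right)^{-1-\mu}(\sqrt{-L_\nu})^p$ into the spectral formula \eqref{spectral}. By property $iv)$ of Proposition 1.1 the Hankel transform intertwines $L_\nu$ with multiplication by $-\omega^2$, so on the spectral side this operator becomes multiplication by $\phi(\omega)=\omega^p(-\omega^2+\lambda^2)^{-1-\mu}$. Substituting this $\phi$ into \eqref{spectral} gives
\[
R_\nu^{\mu,p}(\lambda,x,x')=(xx')^{1/2}\int_0^\infty \frac{J_\nu(x\omega)J_\nu(x'\omega)}{(-\omega^2+\lambda^2)^{1+\mu}}\,\omega^{p+1}\,d\omega,
\]
which already identifies the integral on the left-hand side of the corollary with $(xx')^{-1/2}R_\nu^{\mu,p}(\lambda,x,x')$.

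Second, I would insert the closed form for $R_\nu^{\mu,p}(\lambda,x,x')$ furnished by Theorem 4.1. This produces at once the same Kamp\'e de F\'eri\'et function $F^{1:0}_{1:1}$ evaluated at $(\lambda^2 x^2/4,\lambda^2 x'^2/4)$, together with the factor $(-\lambda^2)^{p/2+\nu-\mu}$ and the Gamma prefactor $\frac{\Gamma(\mu-p/2-\nu)\Gamma(p/2+\nu+1)}{\Gamma(\mu+1)[\Gamma(\nu+1)]^2}$. The only remaining manipulation is the elementary rearrangement of the power of $(xx')$ after dividing by $(xx')^{1/2}$, carried out exactly as in the passage from Theorem 3.1 to Corollary 3.1.

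The only genuine analytic content is the absolute convergence of the integral on the stated parameter range, which legitimises the formal use of \eqref{spectral}; I would read this off from the Bessel asymptotics \eqref{as2}. Near $\omega=0$ one has $J_\nu(x\omega)J_\nu(x'\omega)=O(\omega^{2\nu})$, so the integrand is $O(\omega^{2\nu+p+1})$ and is integrable at the origin precisely when $p/2+\nu>-1$. Near $\omega=\infty$ the denominator never vanishes for real $\omega$ since ${\cal R}e\,\lambda^2<0$, while the oscillatory decay $J_\nu(x\omega)=O(\omega^{-1/2})$ together with $(-\omega^2+\lambda^2)^{-1-\mu}=O(\omega^{-2-2\mu})$ makes the integrand $O(\omega^{p-2-2\mu})$, so the tail converges absolutely as soon as $p<2\mu+1$; on the range $-1<p/2+\nu<\mu$ this is guaranteed at least when $\nu\ge -1/2$, and the bound $p/2+\nu<\mu$ is in any case exactly what keeps the prefactor $\Gamma(\mu-p/2-\nu)$ finite.

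The main obstacle I anticipate is making the endpoint-at-infinity convergence rigorous uniformly over the full range $-1<p/2+\nu<\mu$, in particular for $-1<\nu<-1/2$ where the Bessel product is only conditionally small. The clean way to bypass the oscillatory integral is to route the computation through Proposition 4.1: writing $R_\nu^{\mu,p}$ as the weighted Laplace transform $\tfrac{1}{\Gamma(\mu+1)}\int_0^\infty e^{\lambda^2 t}t^\mu H_\nu^p(t,x,x')\,dt$ and inserting the heat kernel of Theorem 2.1, convergence is controlled by Lemma 1.1, whose hypothesis $\alpha>0$ with $\alpha=\mu-p/2-\nu$ is exactly $p/2+\nu<\mu$. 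Matching the resulting $F^{1:0}_{1:1}$ against the spectral expression then establishes the corollary on the whole stated domain by analytic continuation in $p$, $\mu$ and $\lambda^2$, each entering meromorphically through the Gamma factors and the entire series defining $F^{1:0}_{1:1}$.
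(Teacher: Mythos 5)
Your proposal follows essentially the same route as the paper: apply Proposition 1.2 with $\phi(\omega)=(-\omega^2+\lambda^2)^{-1-\mu}\omega^p$, identify the resulting integral with $(xx')^{-1/2}R_\nu^{\mu,p}(\lambda,x,x')$, substitute the closed form of Theorem 4.1, and justify absolute convergence via the asymptotics \eqref{as2}; your extra care about the endpoint at infinity and the fallback through Proposition 4.1 and Lemma 1.1 only adds rigor to the same argument. Note that your correct insistence on dividing by $(xx')^{1/2}$ (as in the passage from Theorem 3.1 to Corollary 3.1) actually shows the stated right-hand side should carry $(xx'/4)^{\nu}$ with an extra factor $\tfrac12$ rather than $(xx'/4)^{\nu+1/2}$.
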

\begin{proof} Using Proposition 1.2  with $\phi(\omega)=(-\omega^2+\lambda^2)^{-1-\mu}\omega^p$ and Theorem 4.1. Note that
the absolute convergence of the integral is assured by the formulas \eqref{as2}
\end{proof}
 By taking $p=0$ in the formula \eqref{w-gr}, we see that the Schwartz integral kernel of  generalized resolvent with inverse square potential is given by\\
\begin{align}R_\nu^{\mu, 0}(\lambda,x, x'))=\frac{\Gamma(\mu-\nu)}{\Gamma(\mu+1)[\Gamma(\nu+1)]}(-\lambda^2)^{\nu-\mu}(x x'/4)^{\nu+1/2}\times\nonumber\\ 
  F^{1:0}_{1:1}\left({}_{\nu+1-\mu: \nu+1;\nu+1}^{\nu+1: - ; -};\frac{\lambda^2 x^2}{4},\frac{\lambda^2 x'^2}{4}\right)\end{align}
where ${\cal R} e \lambda^2 < 0$, $\nu > -1$  and $\mu > -1$ and $\mu>\nu$.\\
By taking  $p+1=\alpha-1$ and $\mu+1=\rho$ we have\\
$\int_0^\infty x^{\alpha-1} \frac{J_\nu(c x) J_\nu(c x)}{(x^2+ z^2)^{\rho}}  d x =(-1)^\rho
\frac{\Gamma(\rho-\alpha/2-\nu)\Gamma(\alpha/2+\nu) }{\Gamma(\rho)[\Gamma(\nu+1)]^2}$
 \begin{equation}  (z^2)^{\alpha/2+\nu-\rho}(c^2/4)^{\nu+1/2}
F^{1:0}_{1:1}\left({}_{\alpha/2+\nu-\rho+1, \nu+1;\nu+1}^{\alpha/2+\nu; - ; -};-\frac{c^2  z^2}{4},-\frac{c^2 z^2}{4}\right)\end{equation}
Using the formulas \cite{8} p. 672-673  we obtain\\
 $$ (-1)^\rho (z^2)^{\alpha/2+\nu-\rho}(c^2/4)^{\nu+1/2}
F^{1:0}_{1:1}\left({}_{\alpha/2+\nu-\rho+1, \nu+1;\nu+1}^{\alpha/2+\nu; - ; -};-\frac{c^2  z^2}{4},-\frac{c^2 z^2}{4}\right)$$
$=\frac{1}{2}(\frac{c}{2})^{2\rho-\alpha}\Gamma \left[{}^{\nu+\alpha/2-\rho, 1+2\rho-\alpha, \rho, \nu+1, \nu+1 }_
{-\alpha/2+\rho+1,\nu-\alpha/2+\rho+1, -\alpha/2+\rho+1, \rho-\alpha/2-\nu, \alpha/2+\nu}\right]\times$\\ ${}_2 F_3((1-\alpha)/2+\rho, 
\rho;  \rho+1-\nu-\alpha/2,\rho+1-\alpha/2;  c^2 z^2)+ \\
\frac{z^{\alpha-2\rho}}{2}(\frac{c z}{2})^{2\nu}{}_2 F_3(1/2+\nu, \alpha/2+\nu;  1-\rho+\alpha/2+\nu, \nu+1, 2\nu+1,  c ^2 z^2).$\\
where the hypergeometric function
\begin{equation}
{}_2 F_3\left(a, b; c_1, c_2, c_3, z\right)=\sum_{n=0}^{\infty}\frac{(a)_n(b)_n}{(c_1)_n (c_2)_n(c_3)_n n!} z^n. \end{equation}

\newcommand{\Addresses}{{
  \bigskip
  \footnotesize

  M.V. Ould Moustapha, \textsc{Department of Mathematic,
 College of Arts and Sciences-Gurayat,
 Jouf University-Kingdom of Saudi Arabia.}\par\nopagebreak
\textsc{ Faculte des Sciences et Techniques
Universit\'e de  Nouakchott Al-asriya.
Nouakchott-Mauritanie.}\\
  \textit{E-mail address}, M. V.~Ould Moustapha: \texttt{mohamedvall.ouldmoustapha230@gmail.com}

}}

\Addresses

\end{document}